\newtheorem{thm}{Theorem}[section]
\newtheorem{lem}[thm]{Lemma}
\newtheorem{prop}[thm]{Proposition}
\newtheorem{cor}[thm]{Corollary}
\theoremstyle{definition}
\newtheorem{defn}[thm]{Definition}
\newtheorem{exmp}[thm]{Example}
\theoremstyle{remark}
\newtheorem{rem}[thm]{Remark}
\renewcommand{\emptyset}{\varnothing}
\newcommand{\bi}{\leftrightarrow}
\newcommand{\Left}{\text{Left}}
\newcommand{\Right}{\text{Right}}
\newcommand{\QED}{\ifhmode\unskip\nobreak\fi\quad {\rm Q.E.D.}} % QED
\newcommand{\cF}{\mathcal{F}}
\newcommand{\cO}{\mathcal{O}}
\newcommand{\R}{\mathbb{R}}
\newcommand{\cT}{\mathcal{T}}
\numberwithin{equation}{section}
\title[Identifiability of Structural Equation Models]{Generic Identifiability of Linear Structural Equation Models by Ancestor Decomposition}
\author{Mathias Drton} 
\address{Department of Statistics, University
  of Washington, Seattle, WA, U.S.A.}
\email{md5@uw.edu}
\author{Luca Weihs} 
\address{Department of Statistics, University
  of Washington, Seattle, WA, U.S.A.}
\email{lucaw@uw.edu}
\date{\today}                                
\begin{document}

\begin{abstract}
Linear structural equation models, which relate random variables via linear interdependencies and Gaussian noise, are a popular tool for modeling multivariate joint distributions.  These models correspond to mixed graphs that include both directed and bidirected edges representing the linear relationships and correlations between noise terms, respectively. A question of interest for these models is that of parameter identifiability, whether or not it is possible to recover edge coefficients from the joint covariance matrix of the random variables.  For the problem of determining generic parameter identifiability, we present an algorithm that extends an algorithm from prior work by Foygel, Draisma, and Drton (2012).  The main idea underlying our new algorithm is the use of ancestral subsets of vertices in the graph in application of a decomposition idea of Tian (2005).
\end{abstract}

\keywords{Half-trek criterion, structural equation models, identifiability, generic identifiability}

\maketitle

\section{Introduction}

It is often useful to model the joint distribution of a random vector $X=(X_1,...,X_n)^T$ in terms of a collection of noisy linear interdependencies. In particular, we may postulate that each $X_w$ is a linear function of $X_1,...,X_{w-1},X_{w+1},...,X_n$ and a stochastic noise term $\epsilon_w$. Models of this type are called \emph{linear structural equation models} and can be compactly expressed in matrix form as
\begin{align}
	X = \lambda_0 + \Lambda^TX + \epsilon \label{eq:sem_eq}
\end{align}
where $\Lambda=(\lambda_{vw})$ is a $n\times n$ matrix, $\lambda_0 = (\lambda_{01},...,\lambda_{0n})^T\in\R^n$, and $\epsilon = (\epsilon_1,...,\epsilon_n)^T$ is a random vector of error terms. We will adopt the classical assumption that $\epsilon$ has a non-degenerate multivariate normal distribution with mean 0 and covariance matrix $\Omega=(\omega_{vw})$. With this assumption it follows immediately that $X$ has a multivariate normal distribution with mean $(I-\Lambda)^{-T}\lambda_0$ and covariance matrix
\begin{align} \label{eq:sigma}
	\Sigma = (I-\Lambda)^{-T}\Omega(I-\Lambda)^{-1}
\end{align}
where $I$ is the $n\times n$ identity matrix.  We refer the reader to the book by \citet{bollen} for background on these types of models.

We obtain a collection of interesting models by imposing different patterns of zeros among the coefficients in $\Lambda$ and $\Omega$. These models can then be naturally associated with mixed graphs containing both directed and bidirected edges. In particular, the graph will contain the directed edge $v\to w$ when $\lambda_{vw}$ is not required to be zero and, similarly, will include the bidirected edge $v\bi w$ when $\omega_{vw}$ is potentially non-zero. Representations of this type are often called \emph{path diagrams} and were first advocated for in \cite{wright1921, wright1934}.

A natural question arising in the study of linear structural equations is that of identifiability; whether or not it is possible to uniquely recover the two parameter matrices $\Lambda$ and $\Omega$ from the covariance matrix $\Sigma$ they define via~\eqref{eq:sigma}.  The most stringent version, known as global identifiability, amounts to unique recovery of every pair $(\Lambda,\Omega)$ from the covariance matrix $\Sigma$.  This global property can be characterized efficiently  \citep{drton2011}.  Often, however, a less stringent notion that we term generic identifiability is of interest.  This property requires only that a generic (or randomly chosen) pair $(\Lambda,\Omega)$ can be recovered from its covariance matrix.  The computational complexity of deciding whether a given mixed graph $G$ defines a generically identifiable linear structural equation model is unknown.  There are, however, a number of graphical criteria that are sufficient for generic identifiability and can be checked in polynomial time in the number of considered variables (or vertices of the graph).  To our knowledge, the most widely applicable such criterion is the Half-Trek Criterion (HTC) of \citet*{halftrek}, which built on earlier work of \citet{brito06}.  The HTC also comes with a necessary condition for generic identifiability but in this paper our focus is on the sufficient condition.  We remark that an extension of the HTC for identification of subsets of  edge coefficients is given in \cite{chen2014}.

We begin with a brief review of background such as the formal connection between structural equation models and mixed graphs, and give a review of prior work in Section \ref{sec:pre}.  In the main Section \ref{sec:anc_decomp}, we will demonstrate a simple method by which to infer generic identifiability of certain entries of $(\Lambda,\Omega)$ by examining subgraphs of a given mixed graph $G$ that are induced by ancestral subsets of vertices. This will extend the applicability of the HTC in the case of acyclic mixed graphs after applying the decomposition techniques of \citet{tian}.  We leverage this extension in an efficient algorithmic form. In Section \ref{sec:comp}, we report on computational experiments demonstrating the applicability of our findings.   A brief conclusion is given in Section \ref{sec:conclusion}.

\section{Preliminaries} \label{sec:pre}

We assume that the reader is familiar with graphical representations of structural equation models and thus only provide a quick review of these topics. For a more in-depth treatment see, for instance, \cite{pearl2009} or \cite{halftrek}. 

\subsection{Mixed Graphs}

For any $n\geq 1$, let $[n] := \{1,...,n\}$. We define a \emph{mixed graph} to be a triple $G=(V,D,B)$ where $V=[n]$ is a finite set of vertices and $D,B\subset V\times V$.  The sets $D$ and $B$ correspond to the directed and the bidirected edges, respectively. When $(v,w)\in D$, we will write $v\to w\in G$ and if $(v,w)\in B$ then we will write $v\bi w\in G$. Since edges in $B$ are bidirected the set $B$ is symmetric, that is, we have $(v,w)\in B \iff (w,v)\in B$. We require that both the directed part $(V,D)$ and bidirected part $(V,B)$ contain no self loops so that $(v,v)\not\in D\cup B$ for all $v\in V$. If the directed graph $(V,D)$ does not contain any cycles, so that there are no vertices $v,w_1,...,w_m\in V$ such that $v\to w_1,w_1\to w_2,..., w_m\to v \in G$, then we say that $G$ is \emph{acyclic}; note, in particular, that $G$ being acyclic does not imply $(V,B)$ does not contain any (undirected) cycles.

A \emph{path} from $v$ to $w$ is any sequence of edges from $D$ or $B$ beginning at $v$ and ending with $w$, here edges need not obey direction and loops are allowed. A \emph{directed path} from $v$ to $w$ is then any path from $v$ to $w$ all of whose edges are directed and pointed in the same direction, away from $v$ and towards $w$. Finally, a \emph{trek} $\pi$ from a \emph{source} $v$ to a \emph{target} $w$ is any path that has no colliding arrowheads, that is, $\pi$ must be of the form
\begin{align*}
	v^{L}_{l}\leftarrow v^{L}_{l-1} \leftarrow ...\leftarrow v^{L}_{0} \longleftrightarrow v^{R}_{0} \to v^{R}_{1} \to ... \to v^{R}_{r-1} \to v^{R}_{r}
\end{align*}
or
\begin{align*}
	v^{L}_{l}\leftarrow v^{L}_{l-1} \leftarrow ...\leftarrow v^{L}_{1} \leftarrow v^T \to v^{R}_{1} \to ... \to v^{R}_{r} \to v^{R}_{r}
\end{align*}
where $v^L_l = v$, $v^R_r = w$, and we call $v^T$ the \emph{top} node. If $\pi$ is as in the first case then we let $\Left(\pi) = \{v^{L}_{0},...,v^{L}_{l}\}$ and $\Right(\pi) = \{v^{R}_{0},...,v^{R}_{r}\}$, if $\pi$ is as in the second case then we let $\Left(\pi) = \{v^T, v^{L}_{1},...,v^{L}_{l}\}$ and $\Right(\pi) = \{v^T, v^{R}_{1},...,v^{R}_{r}\}$. Note that, in the second case, $v^T$ is included in both $\Left(\pi)$ and $\Right(\pi)$. A trek $\pi$ is called a \emph{half-trek} if $|\Left(\pi)| = 1$ so that $\pi$ is of the form
\begin{align*}
	v^{L}_{0} \longleftrightarrow v^{R}_{0} \to v^{R}_{1} \to ... \to v^{R}_{r-1} \to v^{R}_{r}
\end{align*}
or
\begin{align*}
	v^T \to v^{R}_{1} \to ... \to v^{R}_{r} \to v^{R}_{r}
\end{align*}

It will be useful to reference the local neighborhood structure of the graph. For this purpose, for all $v\in V$, we define the two sets
\begin{align}
	 & pa(v) = \{w\in V:w\to v\in G\} , \\
%	\text{the children of $v$} &= ch(v) = \{w:v\to w\in G\} \\
	& sib(v) = \{w\in V:w\bi v\in G\}.
\end{align}
The former comprises the \emph{parents} of $v$, and the latter contains the \emph{siblings} of $v$.

%\subsubsection{Representing Structural Equation Models with Mixed Graphs}

We associate a mixed graph $G$ to a linear structural equation model as follows. Let $\R^D$ be the set of real $n\times n$ matrices $\Lambda=(\lambda_{vw})$ with support $D$, i.e., $\lambda_{vw}\not=0 \implies (v,w)\in D$.     Let $PD_n$ be the cone of $n\times n$ positive definite matrices $\Omega=(\omega_{vw})$.  Define $PD(B)\subset PD_n$ to be the subset of positive definite matrices with support $B$, i.e. for $v\not=w$, $\omega_{vw}\not=0 \implies v\bi w\in G$.

In this paper, we focus on acyclic graphs $G$.  If $G$ is acyclic then the matrix $I-\Lambda$ is invertible for all $\Lambda\in\R^D$.  In other words, the equation system from~\eqref{eq:sem_eq} can always be solved uniquely for $X$.  We are led to the following definition.

\begin{defn}
The \emph{linear structural equation model given by an acyclic mixed graph} $G=(V,D,B)$ with $V=[n]$ is the collection of all $n$-dimensional normal distributions with covariance matrix
\begin{align*}
	\Sigma = (I-\Lambda)^{-T}\Omega(I-\Lambda)^{-1}
\end{align*}
for a choice of $\Lambda \in \R^D$ and $\Omega\in PD(B)$.
\end{defn}

\subsection{Prior Work and the HTC} \label{sec:htc}

For a fixed acyclic mixed graph $G$,  let $\Theta := \R^D\times PD(B)$ be the parameter space and $\phi_G:\Theta\to PD_n$ be the map
\begin{align}
	\phi_G:(\Lambda,\Omega) \mapsto (I-\Lambda)^{-T}\Omega(I-\Lambda)^{-1}.
\end{align}
Then the question of identifiability is equivalent to asking whether the \emph{fiber}
\begin{align*}
	\cF(\Lambda,\Omega) := \phi_G^{-1}(\{\phi_G(\Lambda,\Omega)\})%\{(\Lambda',\Omega')\in\Theta: \phi(\Lambda',\Omega') = (\Lambda,\Omega)\}
\end{align*}
equals the singleton $\{(\Lambda,\Omega)\}$.   We note that the above notions are well-defined also when $G$ is not acyclic but, in that case, $\R^D$ should be restricted to contain only matrices $\Lambda$ with $I-\Lambda$ invertible.

When $\cF(\Lambda,\Omega)=\{(\Lambda,\Omega)\}$ for all $(\Lambda,\Omega)\in\Theta$, so that $\phi_G$ is injective on $\Theta$, then $G$ is said to be \emph{globally identifiable}.  Global identifiability is, however, often too strong a condition.  So-called instrumental variable problems, for instance, give rise to graphs $G$ that would not be globally identifiable but for which the set of $(\Lambda, \Omega)$ on which identifiability fails has measure zero;  see the example in the introduction of \citet{halftrek}.   Instead, we will be concerned with the question of generic identifiability.

\begin{defn}
A mixed graph $G$ is said to be \emph{generically identifiable} if there exists a proper algebraic subset $A\subset \Theta$ so that $G$ is identifiable on $\Theta\setminus A$.
\end{defn}

Here, as usual, an \emph{algebraic set} is defined as the zero-set of a collection of polynomials.   We again refer the reader to the introduction of \cite{halftrek} for  an in-depth exposition on why generic identifiability is an often appropriate weakening of global identifiability.  

Now there will be cases when we are interested in understanding the generic identifiability of certain coefficients of a mixed graph $G$ rather than all coefficients simultaneously. In these cases we say that the coefficient $\lambda_{vu}$ (or $\omega_{vu}$),  for $u,v\in V$, is generically identifiable in $G$ if the projection of the fiber $\cF(\Lambda,\Omega)$ onto $\lambda_{vu}$ (or $\omega_{vu}$) is a singleton for all $\Theta\setminus A$ where $A\subset \Theta$ is a proper algebraic set. 

Let $\Lambda$ and $\Omega$ be matrices of indeterminates as in Equation \eqref{eq:sigma} with zero pattern corresponding to $G$. Then, by the Trek Rule of  \citet{wright1921}, see also \citet*{spirtes2000}, the covariance $\Sigma_{vw}$ can be represented as a sum of monomials corresponding to treks between $v$ and $w$ in $G$. 
%To see why this is the case note that
%\begin{align} \label{eq:inverse_directed}
%	((I-\Lambda)^{-1})_{vw} = \sum_{\pi\in\cP} \prod_{x\to y\in \pi} \lambda_{xy}
%\end{align}
%where $\cP(v,w)$ is the collection of all directed paths from $v$ to $w$. Equation \ref{eq:inverse_directed} follows by writing $(I-\Lambda)^{-1} = I + \Lambda+\Lambda^2+...$. When $G$ is acyclic we have that $\Lambda$ is strictly upper triangular so that $ I + \Lambda+\Lambda^2+... =  I + \Lambda+\Lambda^2+...+\Lambda^{n-1}$, in the general case some conditions must be placed on $\Lambda$ so that convergence occurs.
To state the Trek Rule formally, let $\cT(v,w)$ be the set of all treks from $v$ to $w$ in $G$. Then for any $\pi\in \cT(v,w)$, if $\pi$ contains no bidirected edge and has top node $z$, we define the \emph{trek monomial} as
\begin{align*}
	\pi(\Lambda,\Omega) = \omega_{zz}\prod_{x\to y\in \pi} \lambda_{xy},
\end{align*}
and if $\pi$ contains a bidirected edge connecting $u,z\in V$ then we define the trek monomial as
\begin{align*}
	\pi(\Lambda,\Omega) = \omega_{uz}\prod_{x\to y\in \pi} \lambda_{xy}
\end{align*}
We may then state the rule as follows.
\begin{prop}[Trek Rule] \label{prop:trek_rule}
For all $v,w\in V$, the covariance matrix $\Sigma = (I-\Lambda)^{-T}\Omega(I-\Lambda)^{-1}$ corresponding to a mixed graph $G$ satisfies
\begin{align*}
	\Sigma_{vw} = \sum_{\pi\in \cT(v,w)} \pi(\Lambda,\Omega).
\end{align*}
\end{prop}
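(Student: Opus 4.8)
The plan is to prove the Trek Rule by expanding the matrix inverse $(I-\Lambda)^{-1}$ as a Neumann series and reading off the resulting monomials combinatorially. Since $G$ is acyclic, we may order the vertices so that $\Lambda$ is (strictly) upper triangular; then $\Lambda$ is nilpotent, so $(I-\Lambda)^{-1} = \sum_{k\geq 0}\Lambda^k$ is a finite sum. First I would establish the combinatorial meaning of the entries of this series. A standard induction on $k$ shows that
\begin{align*}
	(\Lambda^k)_{zw} = \sum_{z=u_0\to u_1\to\cdots\to u_k=w}\ \prod_{i=1}^{k}\lambda_{u_{i-1}u_i},
\end{align*}
where the sum ranges over all directed walks of length exactly $k$ from $z$ to $w$ in $G$; acyclicity makes every such walk a genuine directed path. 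Summing over $k$ then gives that the $(z,w)$ entry of $(I-\Lambda)^{-1}$ equals the sum of the path monomials $\prod_{x\to y\in P}\lambda_{xy}$ over all directed paths $P$ from $z$ to $w$ (including the empty path, contributing $1$, when $z=w$).

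Next I would assemble the three matrix factors. Writing out the matrix product entrywise,
\begin{align*}
	\Sigma_{vw} = \big((I-\Lambda)^{-T}\Omega(I-\Lambda)^{-1}\big)_{vw} = \sum_{s,t\in V}\big((I-\Lambda)^{-1}\big)_{sv}\,\omega_{st}\,\big((I-\Lambda)^{-1}\big)_{tw}.
\end{align*}
Here I would use the combinatorial description of the two inverse factors: $\big((I-\Lambda)^{-1}\big)_{sv}$ is the sum over directed paths from $s$ to $v$, which when read backwards are exactly the left-hand sides $v=v^L_l\leftarrow\cdots\leftarrow v^L_0=s$ of a trek, and $\big((I-\Lambda)^{-1}\big)_{tw}$ is the sum over directed paths $t=v^R_0\to\cdots\to v^R_r=w$ forming the right-hand side. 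The middle factor $\omega_{st}$ splits into the bidirected-edge case ($s\neq t$, contributing the monomial $\omega_{st}$ with the two sides joined by a bidirected edge $s\bi t$) and the diagonal case ($s=t$, contributing $\omega_{ss}$ and identifying the top node $v^T=s=t$). Multiplying out the triple sum therefore produces exactly one term per pair (left directed path, right directed path) joined at their top through $\Omega$, and each such term is precisely a trek monomial $\pi(\Lambda,\Omega)$ as defined above.

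The key step is the bijection between the monomials generated by expanding the triple matrix product and the treks in $\cT(v,w)$. Concretely, I would argue that sending a pair consisting of a directed $s\!\to\! v$ path and a directed $t\!\to\! w$ path together with the choice of $\omega_{st}$ to the concatenated no-collider path from $v$ to $w$ is a well-defined bijection onto $\cT(v,w)$, and that under it the product of coefficients matches $\pi(\Lambda,\Omega)$. The main obstacle I anticipate is bookkeeping rather than conceptual: one must check that the two cases in the definition of a trek (a bidirected top edge versus a single top node $v^T$) correspond exactly to the off-diagonal versus diagonal entries of $\Omega$, and that no trek is produced twice and none is missed — in particular that the empty directed paths on either side are handled correctly so that half-treks and length-zero treks appear with the right multiplicity. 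Once this correspondence is verified term by term, summing over all treks yields $\Sigma_{vw}=\sum_{\pi\in\cT(v,w)}\pi(\Lambda,\Omega)$, completing the proof.
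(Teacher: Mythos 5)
Your proposal is correct, but note that the paper itself offers no proof of Proposition \ref{prop:trek_rule}: it is quoted as a classical result of Wright (1921), with \citet{spirtes2000} as a further reference. Your Neumann-series argument --- expanding $(I-\Lambda)^{-1}=\sum_k \Lambda^k$ using nilpotency of $\Lambda$ in the acyclic case, interpreting entries as sums over directed paths, and matching the diagonal versus off-diagonal entries of $\Omega$ to the two cases in the trek definition --- is precisely the standard proof found in the cited literature, so there is nothing in the paper to diverge from.
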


Before giving the statement of the HTC, we must first define what is meant by a half-trek system. Let $\Pi = \{\pi_1,...,\pi_m\}$ be a collection of $m$ treks with each $\pi_i$ having source $x_i$ and target $y_i$, then $\Pi$ is called a \emph{system of treks} from $X=\{x_1,...,x_m\}$ to $Y=\{y_1,...,y_m\}$ if $|X|=|Y|=m$ so that all sources as well as all targets are pairwise distinct.  If each $\pi_i$ is a half-trek, then $\Pi$ is a \emph{system of half-treks}. Moreover, a collection $\Pi=\{\pi_1,...,\pi_m\}$ of treks is said to have \emph{no sided intersection} if
\begin{align*}
	\Left(\pi_i)\cap\Left(\pi_j) =\emptyset = \Right(\pi_i)\cap\Right(\pi_j), \ \forall i\not=j
\end{align*}

Let $htr(v)$ be the collection of vertices $w\in V\setminus (\{v\}\cup sib(v))$ for which there is a half-trek from $v$ to $w$, these $w$ are called \emph{half-trek reachable} from $v$.  We have the following definition and result of \citet{halftrek}.
\begin{defn}
A set of nodes $Y\subset V$ satisfies the \emph{half-trek criterion} with respect to a node $v\in V$ if
\begin{enumerate}[(i)]
	\item $|Y|=|pa(v)|$,
	\item $Y\cap (\{v\}\cup sib(v)) = \emptyset$, and
	\item there is a system of half-treks with no sided intersection from $Y$ to $pa(v)$.
\end{enumerate}
\end{defn}
 %With this definition we may now present their main identifiability result.
\begin{thm}[HTC-identifiability] \label{thm:htc-id}
Let $(Y_v:v\in V)$ be a family of subsets of the vertex set $V$ of a mixed graph $G$. If, for each node $v$, the set $Y_v$ satisfies the half-trek criterion with respect to $v$, and there is a total ordering $\prec$ on the vertex set $V$ such that $w\prec v$ whenever $w\in Y_v\cap htr(v)$, then $G$ is rationally identifiable.
\end{thm}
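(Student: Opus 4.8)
The plan is to produce, for each vertex $v$, an explicit square linear system whose unique solution is the vector of incoming coefficients $(\lambda_{uv})_{u\in pa(v)}$, and to solve these systems one vertex at a time in the order $\prec$. The starting point is the residual $\epsilon_v = X_v - \sum_{u\in pa(v)}\lambda_{uv}X_u$, which for the true parameters is exactly the $v$-th coordinate of $(I-\Lambda)^TX$, i.e.\ the error $\epsilon_v$. For an arbitrary node $y$ I would compute
\begin{align*}
\operatorname{Cov}(\epsilon_v, X_y) = \Sigma_{vy} - \sum_{u\in pa(v)}\lambda_{uv}\Sigma_{uy},
\end{align*}
and then, using $\epsilon=(I-\Lambda)^TX$ together with the Trek Rule (Proposition \ref{prop:trek_rule}), identify this quantity with $[\Omega(I-\Lambda)^{-1}]_{vy}=\sum_{z\in\{v\}\cup sib(v)}\omega_{vz}[(I-\Lambda)^{-1}]_{zy}$, which is precisely the sum of the trek monomials of all half-treks from $v$ to $y$. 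Consequently $\operatorname{Cov}(\epsilon_v,X_y)=0$ whenever $y\notin htr(v)\cup\{v\}\cup sib(v)$; this is the source of the clean linear equations.

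Next I would assemble, for each $v$, a system indexed by $Y_v$, exploiting that condition (ii) forces $Y_v\cap(\{v\}\cup sib(v))=\emptyset$ so every $y\in Y_v$ falls into one of two cases. If $y\in Y_v\setminus htr(v)$, the identity above yields the clean equation $\sum_{u\in pa(v)}\lambda_{uv}\Sigma_{uy}=\Sigma_{vy}$, whose coefficients are entries of $\Sigma$. If instead $y\in Y_v\cap htr(v)$, then by hypothesis $y\prec v$, so when $v$ is processed the incoming coefficients $(\lambda_{ty})_{t\in pa(y)}$ are already identified; I can therefore form $\epsilon_y = X_y-\sum_{t\in pa(y)}\lambda_{ty}X_t$ and use that $\operatorname{Cov}(\epsilon_v,\epsilon_y)=\omega_{vy}=0$, again by (ii). Expanding gives
\begin{align*}
\sum_{u\in pa(v)}\lambda_{uv}\operatorname{Cov}(X_u,\epsilon_y)=\operatorname{Cov}(X_v,\epsilon_y),
\end{align*}
where each covariance equals $[\Sigma(I-\Lambda)]_{\cdot y}$ and hence is a known rational function of $\Sigma$, since it involves only the already-identified column $y$ of $\Lambda$. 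By condition (i) the two cases together produce a square system $A_v\lambda_v=b_v$ in the unknown vector $\lambda_v=(\lambda_{uv})_{u\in pa(v)}$, satisfied by the true parameters.

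It remains to show that $A_v$ is generically invertible, and this is the step I expect to be the main obstacle. The $(y,u)$ entry of $A_v$ is, in both cases, a sum of trek monomials that includes the monomial of every half-trek from $y$ to $u$ (in the first case as a subset of the full trek sum $\Sigma_{uy}$, in the second as the entire half-trek sum $[\Sigma(I-\Lambda)]_{uy}$). The half-trek system with no sided intersection from $Y_v$ to $pa(v)$ guaranteed by condition (iii) then picks out a bijection $\sigma\colon Y_v\to pa(v)$ together with a monomial $\prod_{y\in Y_v}(\text{half-trek }y\to\sigma(y))$ occurring in the Leibniz expansion of $\det A_v$; the hard part is to show that this contribution survives. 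I would establish this by a Lindström--Gessel--Viennot/trek-separation argument, arguing that the absence of sided intersections prevents the distinguished monomial from cancelling, so that $\det A_v$ is not the zero polynomial. Granting this, I would conclude by induction along $\prec$: off the proper algebraic set on which some $\det A_v$ vanishes, each system has the unique solution $\lambda_v=A_v^{-1}b_v$, a rational function of $\Sigma$, whence $\Lambda$ and then $\Omega=(I-\Lambda)^T\Sigma(I-\Lambda)$ are recovered rationally from $\Sigma$, giving rational identifiability of $G$.
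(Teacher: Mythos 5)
Your construction is exactly the argument behind the paper's (cited) proof: the two-case linear system $A_v\lambda_v=b_v$ that you assemble from $\operatorname{Cov}(\epsilon_v,X_y)=0$ for $y\in Y_v\setminus htr(v)$ and $\operatorname{Cov}(\epsilon_v,\epsilon_y)=\omega_{vy}=0$ for $y\in Y_v\cap htr(v)$ is precisely the matrix $A$ and vector $b$ appearing in Lemma~\ref{lem:invert}, taken from the proof of Theorem 1 of \citet{halftrek}, with the same induction along the ordering $\prec$ and the same final recovery $\Omega=(I-\Lambda)^T\Sigma(I-\Lambda)$. The one step you leave as a sketch---generic invertibility of $A_v$ via the no-sided-intersection half-trek system of condition (iii)---is exactly Lemma 2 of \citet{halftrek}, which is indeed established by the Lindstr\"om--Gessel--Viennot-type non-cancellation argument you outline, so your proposal follows the paper's route essentially verbatim.
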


The assertion that $G$ is \emph{rationally identifiable} means that the inverse map $\phi_G^{-1}$ can be represented as a rational function on $\Theta\setminus A$ where $A$ is some proper algebraic subset of $\Theta$. Clearly, rational identifiability is a stronger condition than generic identifiability. If a graph $G$ satisfies Theorem \ref{thm:htc-id} we will say that $G$ is \emph{HTC-identifiable} (HTCI). In a similar vein, Theorem 2 of \citet{halftrek} gives sufficient conditions for a graph $G$ to be generically unidentifiable (with generically infinite fibers of $\phi_G$), and we will call such graphs \emph{HTC-unidentifiable} (HTCU). Graphs that are neither HTCI or HTCU are called \emph{HTC-inconclusive}, these are the graphs on which progress is left to be made.

As is noted in Section 8 in \cite{halftrek} we may extend the power of the HTC by using the graph decomposition techniques of \citet{tian}. Let $C_1,...,C_k\subset V$ be the unique partitioning of $V$ where $v,w\in C_i$ if and only if there exists a (possibly empty) path from $v$ to $w$ composed of only bidirected edges.  In other words, $C_1,...,C_k$ are the connected components of $(V,B)$, the bidirected part of $G$. For $i=1,\dots,k$, let 
\begin{align*}
	V_i &= C_i\cup pa(C_i), & D_i &= \{v\to w\in G: v\in V_i,\ w\in C_i\}, \\
	B_i & = \{v\bi w\in G: v,w\in C_i\}, & G_i &= (V_i,D_i,B_i).
\end{align*}
Then the mixed graphs $G_1,...,G_k$ are called the \emph{mixed components} of $G$. From the work of \citet{tian}, \cite{halftrek} present the following theorem.
\begin{thm}[Tian Decomposition] \label{thm:tian}
For an acyclic mixed graph $G$ with mixed components $G_1,...,G_k$, the following holds:
\begin{enumerate}[(i)]
	\item $G$ is rationally (or generically) identifiable if and only if all components \\ $G_1, ..., G_k$ are rationally (or generically) identifiable;
	\item $G$ is generically infinite-to-one if and only if there exists a component $G_j$ that is generically infinite-to-one;
	\item if each $G_j$ is generically $h_j$-to-one with $h_j<\infty$, then $G$ is generically $h$-to-one with $h=\prod_{j=1}^k h_j$.
\end{enumerate}
\end{thm}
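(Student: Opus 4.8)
The plan is to reduce all three claims to one structural statement: that, after a rational change of coordinates on both the parameter side and the covariance side, the map $\phi_G$ becomes the Cartesian product of the component maps $\phi_{G_1},\dots,\phi_{G_k}$. Granting this, (i)--(iii) are exactly the statement that the generic fiber of a product of rational maps is the product of the generic fibers of the factors: the product is generically injective if and only if every factor is, it has an infinite generic fiber if and only if some factor does, and finite generic fiber cardinalities multiply. So the whole content lies in producing the factorization.

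I would first dispose of the parameter side. Since the bidirected edge set $B$ only joins vertices of a common component $C_i$, the matrix $\Omega$ is block diagonal for the partition $V=C_1\sqcup\dots\sqcup C_k$, so each entry $\omega_{vw}$ (including the diagonal) is attached to a unique component. Each directed edge $v\to w$ has head $w$ lying in a unique $C_i$, so the free entries of $\Lambda$ also partition by component. This yields a canonical isomorphism $\Theta\cong\Theta_1\times\dots\times\Theta_k$ with $\Theta_i=\R^{D_i}\times PD(B_i)$, under which a point $(\Lambda,\Omega)$ corresponds to the tuple $(\theta_1,\dots,\theta_k)$ of its component-restricted parameters.

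The substance is the covariance side, where I would invoke Tian's component factorization. Fix a topological order of $V$ compatible with $D$, and for each $w$ write $\mathrm{pre}(w)$ for the set of its predecessors. By the chain rule the joint law factors as $\prod_{w}P(X_w\mid X_{\mathrm{pre}(w)})$, and collecting the factors with $w\in C_i$ defines a component factor $Q_i$. In the Gaussian setting each conditional $P(X_w\mid X_{\mathrm{pre}(w)})$ is determined by a vector of regression coefficients and a partial variance, all of which are rational functions of $\Sigma$. The key lemma I would prove is that the data of $Q_i$ depend on the parameters only through $\theta_i$ and coincide with the data of the analogous factor computed inside the component model $G_i$; the Trek Rule (Proposition~\ref{prop:trek_rule}) is the natural tool, since the treks governing the regression of $X_w$ on its predecessors, for $w\in C_i$, only traverse edges of $G_i$ and recover precisely the $D_i$- and $B_i$-coefficients together with $\omega_{ww}$.

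Finally I would glue: because $\Sigma$ is recovered from the collection of conditionals by the same chain rule, the assignment $\Sigma\mapsto(Q_1,\dots,Q_k)$ is a rational bijection onto its image, and together with $\Theta\cong\prod_i\Theta_i$ it exhibits $\phi_G$ as birationally equivalent to $\prod_i\phi_{G_i}$. Consequently $\cF(\Lambda,\Omega)$ is, for generic $(\Lambda,\Omega)$, the product of the component fibers, and (i)--(iii) follow. I expect the main obstacle to be the key lemma rather than the formal product argument, and precisely because components can be \emph{interleaved} in the topological order --- a vertex of $C_i$ may be an ancestor of a parent of $C_i$, which makes the errors of $C_i$ correlated with the values at $pa(C_i)$. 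Hence naive conditioning on $pa(C_i)$ does not isolate $\theta_i$; one must condition on all predecessors and then establish a localization statement, namely that the regression of $X_w$ ($w\in C_i$) localizes to $V_i$, so that foreign predecessors receive zero coefficient and no parameter outside $\theta_i$ leaks into $Q_i$. Running alongside this is the genericity bookkeeping: all the regression and reconstruction formulas must be confined to the complement of a single proper algebraic set, and one must verify that a point generic for $G$ restricts to points generic for each $G_i$ and conversely, so that the fiber-cardinality and fiber-dimension comparisons underlying (ii) and (iii) are legitimate.
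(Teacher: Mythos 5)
The paper never proves this theorem: it is imported verbatim from \citet{halftrek}, who in turn derive it from \citet{tian}, so there is no in-paper proof to compare against. Your proposal essentially reconstructs the standard Tian argument that underlies the cited result: factor the joint density along a topological order via the chain rule, group the conditionals $P(X_w\mid X_{\mathrm{pre}(w)})$ by bidirected-connected component into c-factors $Q_i$, show each $Q_i$ depends only on the component parameters $\theta_i$ and agrees with the corresponding factor of the component model $G_i$, and conclude that $\phi_G$ is birationally a product of the $\phi_{G_i}$, from which (i)--(iii) follow by the generic-fiber-of-a-product argument. Your localization lemma --- the step you rightly identify as the crux, given that components interleave --- is true and has a short proof you do not quite spell out: since $\mathrm{pre}(w)$ is ancestrally closed, conditioning on $X_{\mathrm{pre}(w)}$ is a triangular linear change of variables to conditioning on the noises $(\epsilon_u)_{u\in\mathrm{pre}(w)}$, where $\epsilon_u=x_u-\sum_{p\in pa(u)}\lambda_{pu}x_p$; because $\Omega$ is block diagonal over the components, $\E[\epsilon_w\mid \epsilon_{\mathrm{pre}(w)}]=\Omega_{w,S}\Omega_{S,S}^{-1}\epsilon_S$ with $S=C_i\cap\mathrm{pre}(w)$, so the conditional mean and variance involve only $x$-values on $V_i$ and parameters in $\theta_i$, with foreign predecessors entering with coefficient zero.

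One point in your write-up is actually wrong as stated and needs the repair you only gesture at in your ``genericity bookkeeping'': the claimed canonical isomorphism $\Theta\cong\Theta_1\times\dots\times\Theta_k$ fails on dimensional grounds, because each $\Theta_i=\R^{D_i}\times PD(B_i)$ carries extra free variance parameters $\omega_{pp}$ for the source nodes $p\in pa(C_i)\setminus C_i$ of the component model, which have no counterpart in $\Theta$. These variances are harmless for fiber cardinality (a source node of $G_i$ has no parents or siblings, so $\omega_{pp}=\Sigma^{(i)}_{pp}$ is always rationally identifiable), but they matter when transferring genericity from the $G_i$ back to $G$: you cannot simply embed $\Theta$ into $\prod_i\Theta_i$ by fixing the parent variances to a constant, since a proper algebraic set $A_i\subset\Theta_i$ witnessing generic identifiability of $G_i$ may entirely contain any fixed slice (e.g.\ $A_i\supset\{\omega_{pp}=1\}$). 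The correct argument is to note that the set of component parameters $\theta_i'$ whose \emph{entire} parent-variance slice lies inside $A_i$ is again algebraic (the defining polynomials must vanish identically in the slice variables, i.e.\ all their coefficients vanish) and proper whenever $A_i$ is, so identifiability off $A_i$ still yields singleton $Q_i$-fibers off a proper algebraic subset of the $\theta_i'$-space. With this repair, and with the localization lemma proved as above, your plan is sound and is, in substance, the proof found in the sources the paper cites rather than an alternative route.
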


We remark that this decomposition also plays a role in non-linear models; see, for instance,  the paper of \cite{shpitser2014} and the references given therein.

\section{Ancestral Decomposition} \label{sec:anc_decomp}

For a later strengthening of the HTC, we will show that the generic identification of certain subgraphs of an acyclic mixed graph $G = (V,D,B)$ implies the generic identification of their associated edge coefficients in the larger graph $G$. This result is straightforward and is well known in other forms. Surprisingly, however, this simple idea can extend the applicability of the HTC when combined with the decomposition from Theorem \ref{thm:tian}. We will first define what we mean by an ancestral subset and an induced graph,
\begin{defn}
	Let $V'\subset V$ be a subset of vertices.  The \emph{ancestors of $V'$} form the set 
\begin{align*}
	An(V') = \{v\in V: \text{there exists a directed path from $v$ to some $w\in V'$}\},
\end{align*}
where we consider the empty path to be directed so that $V'\subset An(V')$. If $V' = An(V')$, then we call $V'$ \emph{ancestral}.
\end{defn}
\begin{defn}
	Let  $V'\subset V$ be again a subset of vertices. 
	The \emph{subgraph of $G$ induced by $V'$} is the mixed graph $G_{V'}=(V',D',B')$  with
\begin{align*}
	D' &= \{v\to w\in G: v,w\in V'\}, \\
	B' &= \{v\bi w\in G: v,w\in V'\}.
\end{align*}
\end{defn}

We now have the following simple fact.

\begin{thm} \label{thm:ancestral}
Let $G=(V,D,B)$ be a mixed graph, and let $V'$ be an ancestral subset of $V$. If the induced subgraph $G_{V'}$ is generically (or rationally) identifiable then so are all the corresponding edge coefficients  in $G$.
\end{thm}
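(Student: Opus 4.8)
The plan is to reduce the statement to the Trek Rule (Proposition~\ref{prop:trek_rule}) by showing that the principal submatrix of $\Sigma$ indexed by $V'$ is exactly the covariance matrix produced by the induced model on $G_{V'}$. Write $\Theta' = \R^{D'}\times PD(B')$ for the parameter space of $G_{V'}$, and let $\rho:\Theta\to\Theta'$ be the restriction map sending $(\Lambda,\Omega)$ to the pair consisting of the submatrix $(\lambda_{vw})_{v,w\in V'}$ and the principal submatrix $(\omega_{vw})_{v,w\in V'}$. This map is well defined: the directed part restricts into $\R^{D'}$ since $\lambda_{vw}\neq 0$ with $v,w\in V'$ forces $(v,w)\in D'$, while a principal submatrix of a positive definite matrix is again positive definite with support contained in $B'$. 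It is moreover surjective, since any $(\Lambda',\Omega')\in\Theta'$ extends to $\Theta$ by padding $\Lambda'$ with zeros and taking $\Omega=\Omega'\oplus I$ in block form.

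\emph{Key step.} The central claim is that for all $v,w\in V'$,
\[
	\Sigma_{vw} = \bigl(\phi_{G_{V'}}(\rho(\Lambda,\Omega))\bigr)_{vw}.
\]
By the Trek Rule both sides are sums of trek monomials, the left over $\cT(v,w)$ computed in $G$ and the right over the analogous set computed in $G_{V'}$, so it suffices to show these two trek sets coincide and that corresponding monomials agree. Given any trek $\pi\in\cT(v,w)$ in $G$, every vertex on its left-hand directed path admits a directed path to $v$ and hence lies in $An(\{v\})$, and every vertex on its right-hand directed path admits a directed path to $w$ and hence lies in $An(\{w\})$; the top node (or the two bidirected endpoints) is covered the same way. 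Since $v,w\in V'$ and $V'=An(V')$ is ancestral, all vertices of $\pi$ lie in $V'$, so $\pi$ uses only edges of $G_{V'}$ and its monomial involves only coefficients $\lambda_{xy},\omega_{xy}$ with $x,y\in V'$, i.e.\ only the coordinates retained by $\rho$. Conversely every trek in $G_{V'}$ is a trek in $G$. This proves the displayed identity, and I expect this use of ancestrality to be the only substantive point of the argument.

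\emph{Conclusion.} Suppose first that $G_{V'}$ is generically identifiable, with proper algebraic set $A'\subset\Theta'$ off of which identifiability holds, and put $A=\rho^{-1}(A')$. Since $\rho$ is a linear, hence polynomial, map, $A$ is an algebraic subset of $\Theta$, and it is proper because $\rho$ is surjective onto $\Theta'\supsetneq A'$. Fix $(\Lambda,\Omega)\in\Theta\setminus A$ and let $(\tilde\Lambda,\tilde\Omega)\in\cF(\Lambda,\Omega)$. Restricting the equality $\phi_G(\tilde\Lambda,\tilde\Omega)=\phi_G(\Lambda,\Omega)$ to the $V'$ block and applying the key step gives $\phi_{G_{V'}}(\rho(\tilde\Lambda,\tilde\Omega))=\phi_{G_{V'}}(\rho(\Lambda,\Omega))$. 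As $\rho(\Lambda,\Omega)\in\Theta'\setminus A'$, identifiability of $G_{V'}$ forces $\rho(\tilde\Lambda,\tilde\Omega)=\rho(\Lambda,\Omega)$, that is $\tilde\lambda_{vw}=\lambda_{vw}$ and $\tilde\omega_{vw}=\omega_{vw}$ for all $v,w\in V'$. Thus the projection of $\cF(\Lambda,\Omega)$ onto each such coordinate is a singleton, which is precisely generic identifiability in $G$ of all edge coefficients of $G_{V'}$. For the rational case one argues identically: rational identifiability of $G_{V'}$ expresses $\rho(\Lambda,\Omega)$ as a rational function of the $V'$ block of $\Sigma$, hence of $\Sigma$ itself, on the complement of a proper algebraic set, so the coefficients within $V'$ are rationally identifiable in $G$.
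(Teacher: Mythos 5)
Your proof is correct and follows essentially the same route as the paper's: both hinge on the Trek Rule (Proposition~\ref{prop:trek_rule}) together with the observation that ancestrality of $V'$ forces every trek in $G$ between vertices of $V'$ to stay inside $G_{V'}$, so the $V'\times V'$ block of $\Sigma$ equals the covariance matrix of the induced model. Your additional bookkeeping --- the restriction map $\rho$, its surjectivity, and the pullback $A=\rho^{-1}(A')$ of the exceptional algebraic set --- makes explicit the genericity argument the paper leaves implicit, but introduces no new idea.
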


\begin{proof}
Let the covariance matrix $\Sigma = (I-\Lambda)^{-T}\Omega(I-\Lambda)^{-1}$ correspond to $G$, that is, $\Lambda\in\R^D$ and $\Omega\in PD(B)$.  Let $\Lambda'$ and $\Omega'$ denote the $V'\times V'$ submatrices of $\Lambda$ and $\Omega$, respectively, and let $\Sigma' = (I_{|V'|}-\Lambda')^{-T}\Omega(I_{|V'|}-\Lambda')^{-1}$ where $I_{|V'|}$ is the $|V'|\times |V'|$ identity matrix.
For ease of notation, write $G' = G_{V'}$.

Recall that for any $v,w\in V$, the set $\cT(v,w)$ comprises all treks between $v$ and $w$ in $G$.  Similarly, write $\cT_{G'}(v,w)$ for the set of treks between $v$ and $w$ in $G'$.  Since $V'$ is ancestral, it holds that $\cT(v,w) = \cT_{G'}(v,w)$ for all $v,w\in V'$.  Thus, by Proposition \ref{prop:trek_rule}, we have that for any $v,w\in V'$
\begin{align*}
	\Sigma_{vw} &= \sum_{\pi\in \cT(v,w)} \pi(\Lambda,\Omega) 
	= \sum_{\pi\in \cT_{G'}(v,w)} \pi(\Lambda',\Omega') 
	= \Sigma_{vw}'.
\end{align*}

Now suppose that $G'$ is generically (or rationally) identifiable. Then $\Lambda',\Omega'$ can be generically (or rationally) recovered from $\Sigma'$. As we have just shown that $\Sigma_{vw} = \Sigma_{vw}'$ for all $v,w\in V'$, we have that the entries of $\Lambda,\Omega$ corresponding to $\Lambda',\Omega'$ can be recovered from $\Sigma$ generically (or rationally).
\end{proof}

We may generalize the above theorem so that we do not have to consider the identifiability of all of $G'$ and instead only look at certain edges in $G'$.

\begin{cor}\label{cor:ancestral}
Let $G=(V,D,B)$ be a mixed graph, and let $V'$ be an ancestral subset of $V$. If an edge coefficient of $G_{V'}$ is generically (or rationally) identifiable then so is the corresponding coefficient in $G$.
\end{cor}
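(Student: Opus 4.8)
The plan is to run the proof of Theorem \ref{thm:ancestral} essentially verbatim, tracking a single matrix entry in place of the entire pair $(\Lambda', \Omega')$. Write $G' = G_{V'}$, let $\Theta'$ denote its parameter space, and let $r : \Theta \to \Theta'$ be the coordinate projection $(\Lambda, \Omega) \mapsto (\Lambda', \Omega')$ onto the $V' \times V'$ submatrices. The one ingredient I would import from the proof of Theorem \ref{thm:ancestral} is the ancestrality fact that every trek in $G$ between two vertices of $V'$ stays inside $V'$, so that $\cT(v,w) = \cT_{G'}(v,w)$ for all $v,w \in V'$; by Proposition \ref{prop:trek_rule} this yields the identity $\Sigma_{vw} = \Sigma'_{vw}$ between the $V' \times V'$ block of $\phi_G(\Lambda, \Omega)$ and the matrix $\phi_{G'}(\Lambda', \Omega')$. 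I would emphasize that this is an identity of polynomial functions of the parameters, so that it holds at \emph{every} parameter point rather than just a fixed one --- this is precisely what will let me apply it to competing points inside a fiber.

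Next I would relate the two fibers. Say the edge coefficient in question is $\lambda'_{vu}$ of $G'$, the bidirected case $\omega'_{vu}$ being identical, and let $A' \subset \Theta'$ be the proper algebraic set witnessing its generic identifiability in $G'$. Fix $(\Lambda, \Omega) \in \Theta$ with $r(\Lambda, \Omega) = (\Lambda', \Omega') \notin A'$, and let $(\tilde\Lambda, \tilde\Omega)$ be any element of the fiber $\cF(\Lambda, \Omega)$. Restricting the equation $\phi_G(\tilde\Lambda, \tilde\Omega) = \phi_G(\Lambda, \Omega)$ to the $V' \times V'$ block and applying the submatrix identity at both points gives $\phi_{G'}(\tilde\Lambda', \tilde\Omega') = \phi_{G'}(\Lambda', \Omega')$, so $r(\tilde\Lambda, \tilde\Omega)$ lies in the $G'$-fiber over $(\Lambda', \Omega')$. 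Since $(\Lambda', \Omega') \notin A'$, that $G'$-fiber projects to a single value in the $\lambda'_{vu}$-coordinate, forcing $\tilde\lambda'_{vu} = \lambda'_{vu}$. As $\lambda'_{vu}$ and $\lambda_{vu}$ denote the same matrix entry, this reads $\tilde\lambda_{vu} = \lambda_{vu}$, so the projection of $\cF(\Lambda, \Omega)$ onto $\lambda_{vu}$ is the singleton $\{\lambda_{vu}\}$. In the rational case I would instead carry the rational recovery formula for $\lambda'_{vu}$ in terms of $\Sigma'$ through the submatrix identity, thereby expressing $\lambda_{vu}$ as a rational function of the entries of $\Sigma$.

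Finally I would verify that $A := r^{-1}(A')$ is a proper algebraic subset of $\Theta$, which is where the only genuine care is needed. Writing $A'$ as the zero set of polynomials $f_1, \dots, f_m$ that do not all vanish on $\Theta'$, the pullbacks $f_i \circ r$ are again polynomials since $r$ is linear, so $A$ is algebraic; and because $r$ is surjective --- any $(\Lambda', \Omega') \in \Theta'$ extends to $\Theta$ by padding $\Lambda$ with zeros and completing $\Omega$ block-diagonally with a positive definite block on $V \setminus V'$ --- no $f_i \circ r$ vanishes identically on $\Theta$, so $A$ is proper. I expect the main obstacle to be exactly this exceptional-set bookkeeping: checking that $r$ really maps into $\Theta'$ (a principal submatrix of a positive definite $\Omega$ with support $B$ is positive definite with support $B'$) and surjects onto it, and that properness of an algebraic set is preserved under pullback along the dominant projection $r$. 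Everything else is inherited unchanged from Theorem \ref{thm:ancestral}.
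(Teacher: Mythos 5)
Your proposal is correct and follows the same route as the paper, whose proof of this corollary is simply the observation that the argument of Theorem~\ref{thm:ancestral} (the ancestrality of $V'$ giving $\cT(v,w)=\cT_{G'}(v,w)$ and hence $\Sigma_{vw}=\Sigma'_{vw}$ via the Trek Rule) goes through when tracking a single coefficient at a time. Your additional bookkeeping --- verifying that the projection $r$ maps $\Theta$ onto $\Theta'$ and that $r^{-1}(A')$ is a proper algebraic subset --- is a correct and careful elaboration of what the paper leaves implicit.
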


\begin{proof}
This follows exactly as in the proof of Theorem \ref{thm:ancestral} by only considering a single generically (or rationally) identifiable coefficient of $G'$ at a time.
\end{proof}

We give an example as to how Theorem \ref{thm:ancestral} strengthens the HTC.

\begin{figure}[t]
        \centering
        \begin{subfigure}[b]{0.5\textwidth}
\tikzset{
	every node/.style={circle, inner sep=1mm, minimum size=0.55cm, draw, thick, black, fill=white, text=black},
	every path/.style={thick}
}
\begin{tikzpicture}[align=center,node distance=2.2cm]
	\node [] (1) {1};
	\node [] (2) [right of=1]    {2};
	\node [] (3) [below right of=2]    {3};
	\node [] (4) [below left of=3]    {4};
	\node [] (5) [left of=4]    {5};
	\node [] (6) [above left of=5]    {6};
	\draw[blue] [-latex] (1) edge (2);
	\draw[blue] [-latex] (1) edge (3);
	\draw[blue] [-latex] (1) edge (6);
	\draw[blue] [-latex] (2) edge (3);
	\draw[blue] [-latex] (2) edge (4);
	\draw[blue] [-latex] (2) edge (5);
	\draw[blue] [-latex] (2) edge (6);
	\draw[blue] [-latex] (3) edge (4);
	\draw[blue] [-latex] (4) edge (5);
	\draw[red] [latex-latex] (1) edge (4);
	\draw[red] [latex-latex,bend right] (1) edge (6);
	\draw[red] [latex-latex,bend left] (2) edge (3);
	\draw[red] [latex-latex,bend left] (2) edge (5);
	\draw[red] [latex-latex,bend left] (2) edge (6);
\end{tikzpicture}
	\caption{An HTC-inconclusive graph $G$} \label{fig:htc-incon}
        \end{subfigure}%
        ~
        \begin{subfigure}[b]{0.5\textwidth}
\tikzset{
	every node/.style={circle, inner sep=1mm, minimum size=0.55cm, draw, thick, black, fill=white, text=black},
	every path/.style={thick}
}
\begin{tikzpicture}[align=center,node distance=2.2cm]
	\node [] (1) {1};
	\node [] (2) [right of=1]    {2};
	\node [] (3) [below right of=2]    {3};
	\node [] (4) [below left of=3]    {4};
	\node [] (5) [left of=4]    {5};
	\draw[blue] [-latex] (1) edge (2);
	\draw[blue] [-latex] (1) edge (3);
	\draw[blue] [-latex] (2) edge (3);
	\draw[blue] [-latex] (2) edge (4);
	\draw[blue] [-latex] (2) edge (5);
	\draw[blue] [-latex] (3) edge (4);
	\draw[blue] [-latex] (4) edge (5);
	\draw[red] [latex-latex] (1) edge (4);
	\draw[red] [latex-latex,bend left] (2) edge (3);
	\draw[red] [latex-latex,bend left] (2) edge (5);
\end{tikzpicture}
	\caption{The induced subgraph $G_{\{1,2,3,4,5\}}$} \label{fig:ancestral}
        \end{subfigure}%
        ~ \\ ~\\
         \begin{subfigure}[b]{1\textwidth}
         \centering
\tikzset{
	every node/.style={circle, inner sep=1mm, minimum size=0.55cm, draw, thick, black, fill=white, text=black},
	every path/.style={thick}
}
\begin{tikzpicture}[align=center,node distance=2.2cm]
	\node [] (1) {1};
	\node [] (2) [right of=1]    {2};
	\node [] (3) [below right of=2]    {3};
	\node [] (4) [below left of=3]    {4};
	\draw[blue] [-latex] (2) edge (4);
	\draw[blue] [-latex] (3) edge (4);
	\draw[red] [latex-latex] (1) edge (4);
\end{tikzpicture}
~ \quad
\begin{tikzpicture}[align=center,node distance=2.2cm]
	\node [] (1) {1};
	\node [] (2) [right of=1]    {2};
	\node [] (3) [below right of=2]    {3};
	\node [] (4) [below left of=3]    {4};
	\node [] (5) [left of=4]    {5};
	\draw[blue] [-latex] (1) edge (2);
	\draw[blue] [-latex] (1) edge (3);
	\draw[blue] [-latex] (2) edge (3);
	\draw[blue] [-latex] (2) edge (5);
	\draw[blue] [-latex] (4) edge (5);
	\draw[red] [latex-latex,bend left] (2) edge (3);
	\draw[red] [latex-latex,bend left] (2) edge (5);
\end{tikzpicture}
	\caption{The mixed components of $G_{\{1,2,3,4,5\}}$} \label{fig:mixed-comps}
        \end{subfigure}
        \caption{A mixed graph $G$, a subgraph induced by an ancestral subset, and the mixed components the induced graph.}\label{fig:htc-strengthening}
\end{figure}
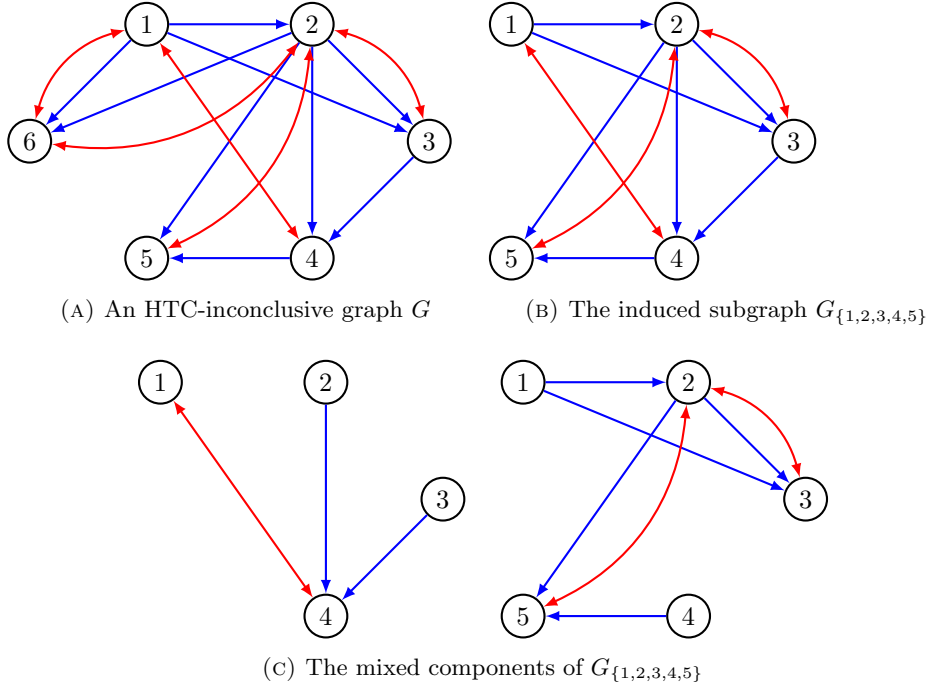

\begin{exmp}
It is straightforward to check that the graph $G$ from Figure \ref{fig:htc-incon} is HTC-inconclusive using Algorithm 1 from \cite{halftrek}. We direct the reader who does not want to perform this computation by hand to the R package SEMID \citep{semid, R}.   Moreover,  $G$ cannot be decomposed as its bidirected part is connected.

Now the set $V'=\{1,2,3,4,5\}$ is ancestral in $G$, so we may apply Theorem \ref{thm:ancestral} to the induced subgraph $G'=G_{1,2,3,4,5}$.   While $G'$ remains HTC-inconclusive, the Tian decomposition of Theorem \ref{thm:tian} is applicable. After decomposing $G'$ into its mixed components, see Figure \ref{fig:mixed-comps}, we  find that each component is HTC-identifiable and thus $G'$ itself is generically identifiable.  To show generic identifiability of $G$, we are left to show that all the coefficients on the directed edges between $pa(6)=\{1,2\}$ and 6 are generically identifiable. Since $Y = \{3,4\}$ satisfies the HTC with respect to $6$ it follows, by Lemma \ref{lem:invert} below, that $\Lambda_{pa(v),v}$ is generically identifiable. Hence,  the entire matrix $\Lambda$ is generically identifiable, and since $(I-\Lambda)^T\Sigma(I-\Lambda)=\Omega$, this implies generic identifiability of  $\Omega$.   We conclude that $G$ is generically identifiable despite being HTC-inconclusive.
\end{exmp}

\begin{lem}\label{lem:invert}
Let $G=(V,D,B)$ be a mixed graph, and let $v\in V$. If $Y\subset V$ satisfies the HTC with respect to $v$ and for each $y\in Y$ we have that $\Lambda_{pa(y),y}$ is generically (or rationally) identifiable, then $\Lambda_{pa(v),v}$ is generically (or rationally) identifiable.
\end{lem}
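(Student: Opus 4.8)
The plan is to reduce the identification of $\Lambda_{pa(v),v}$ to solving a square linear system whose coefficients are already identifiable by hypothesis, and whose invertibility is governed exactly by the half-trek criterion. First I would exploit the defining relation $(I-\Lambda)^T\Sigma(I-\Lambda)=\Omega$ and read off the column indexed by $v$, restricted to the rows $y\in Y$. Writing $\tilde\Sigma := (I-\Lambda)^T\Sigma$ for the ``residualized'' covariance, the $(y,v)$ entry of this relation reads
\begin{align*}
\Omega_{yv} = \tilde\Sigma_{yv} - \sum_{w\in pa(v)}\tilde\Sigma_{yw}\,\lambda_{wv}.
\end{align*}
Condition (ii) of the half-trek criterion guarantees $y\notin\{v\}\cup sib(v)$ for all $y\in Y$, so that $\Omega_{yv}=0$ there, and we obtain the linear system
\begin{align*}
\sum_{w\in pa(v)}\tilde\Sigma_{yw}\,\lambda_{wv} = \tilde\Sigma_{yv}, \qquad y\in Y,
\end{align*}
in the unknowns $(\lambda_{wv})_{w\in pa(v)}=\Lambda_{pa(v),v}$. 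By condition (i) this system is square of size $|Y|=|pa(v)|$.

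Next I would verify that every quantity appearing in the system is itself generically (or rationally) identifiable, so that solving it recovers the desired coefficients. Expanding the definition of $\tilde\Sigma$ gives
\begin{align*}
\tilde\Sigma_{yu} = \Sigma_{yu} - \sum_{x\in pa(y)}\lambda_{xy}\,\Sigma_{xu},
\end{align*}
which depends only on $\Sigma$ and on the coefficients $\Lambda_{pa(y),y}$ for $y\in Y$. The latter are identifiable by the hypothesis of the lemma; this is precisely the ingredient that, in the original proof of Theorem~\ref{thm:htc-id}, was instead supplied by the total ordering $\prec$ together with an inductive argument. Consequently both the coefficient matrix $M:=(\tilde\Sigma_{yw})_{y\in Y,\,w\in pa(v)}$ and the right-hand side $b:=(\tilde\Sigma_{yv})_{y\in Y}$ are determined by $\Sigma$ off a proper algebraic set, and are rational in $\Sigma$ when the hypotheses hold rationally.

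The main obstacle is to show that $M$ is generically invertible, and this is exactly where condition (iii) enters. I would use the identity $\tilde\Sigma = (I-\Lambda)^T\Sigma = \Omega(I-\Lambda)^{-1}$ to recognize each entry $\tilde\Sigma_{yw}$ as the generating function of half-treks from $y$ to $w$: expanding $[(I-\Lambda)^{-1}]_{zw}$ as a sum over directed paths from $z$ to $w$ and noting that $\Omega_{yz}\neq 0$ only for $z=y$ or $z\in sib(y)$ reproduces exactly the half-trek monomials with source $y$ and target $w$. Expanding $\det M$ over the bijections $Y\to pa(v)$ then exhibits it as a signed sum over systems of half-treks from $Y$ to $pa(v)$. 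By the Lindstr\"om--Gessel--Viennot cancellation argument already used to establish the half-trek criterion in \citet{halftrek}, the contributions of systems having a sided intersection cancel, and $\det M$ is a nonzero polynomial precisely when a system of half-treks from $Y$ to $pa(v)$ with no sided intersection exists. Condition (iii) supplies such a system, so $\det M\not\equiv 0$.

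Finally I would conclude by Cramer's rule: on the complement of the proper algebraic set on which either the hypothesized coefficients $\Lambda_{pa(y),y}$ fail to be identifiable or $\det M$ vanishes, the solution $\Lambda_{pa(v),v}=M^{-1}b$ is a single-valued function of $\Sigma$, rational in $\Sigma$ whenever the hypotheses are rational. This yields generic (respectively rational) identifiability of $\Lambda_{pa(v),v}$, as claimed. The only delicate point is the determinant computation, and I expect that reusing the half-trek determinant lemma of \citet{halftrek} will be the cleanest route, since conditions (i)--(iii) are designed exactly to make that argument go through; the novelty here is merely that the role of the ordering is replaced by the explicit identifiability hypothesis on $\Lambda_{pa(y),y}$.
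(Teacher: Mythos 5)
Your proposal is correct in substance and follows the same overall strategy as the paper: reduce identification of $\Lambda_{pa(v),v}$ to a square linear system whose entries are identifiable thanks to the hypothesis on $\Lambda_{pa(y),y}$, and whose generic invertibility is guaranteed by the half-trek criterion, then solve. The one real difference is the choice of rows. You use the residualized rows $[(I-\Lambda)^T\Sigma]_{y,\cdot}$ uniformly for all $y\in Y$, which is legitimate here precisely because the lemma's hypothesis makes $\Lambda_{pa(y),y}$ identifiable for \emph{every} $y\in Y$; the paper instead copies the matrix of \citet{halftrek} exactly, taking the row $\Sigma_{y,\cdot}$ when $y\notin htr(v)$ and $[(I-\Lambda)^T\Sigma]_{y,\cdot}$ when $y\in htr(v)$. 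The payoff of the paper's choice is that both needed facts can then be cited verbatim: the system identity $A\,\Lambda_{pa(v),v}=b$ comes from the proof of Theorem~1 of \citet{halftrek}, and generic invertibility of $A$ is exactly their Lemma~2 (note that with the mixed rows the identifiability hypothesis is only actually needed for $y\in Y\cap htr(v)$, since the plain $\Sigma$-rows are identifiable for free). Your uniform matrix $M$ is not literally the matrix of that lemma, so you cannot simply quote it; you must rerun the determinant argument, as you partly do. Your LGV cancellation step is fine (left-side intersections cannot occur, since half-treks have singleton left sets consisting of their pairwise distinct sources), but the assertion that $\det M\not\equiv 0$ ``precisely when'' a non-intersecting system exists glosses the genuinely nontrivial half: after the intersecting systems cancel in pairs, one must still show that the surviving terms, indexed by systems with no sided intersection, do not cancel among themselves. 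That non-cancellation step is the actual content of the invertibility lemma in \citet{halftrek} (in the spirit of the trek-matrix results of Sullivant, Talaska, and Draisma), and it does extend to your all-half-trek matrix, so the deviation is a matter of citation and bookkeeping rather than a mathematical error; still, the paper's mixed-row formulation is the cleaner route because it makes every step a direct quotation of prior results.
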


\begin{proof}
 Suppose vertex $v$ has $m$ parents, and let $pa(v) = \{p_1,...,p_m\}$.  Since $Y$ satisfies the HTC for $v$, we must have $|Y|=|pa(v)|=m$. Thus we may enumerate the set as $Y=\{y_1,...,y_m\}$.  Define a matrix $A\in\R^{m\times m}$ with entries
\[
	A_{ij} := \left\{
		\begin{array}{ll}
			[(I-\Lambda)^T\Sigma]_{y_ip_j} & \mbox{if } y_i\in htr(v), \\
			\Sigma_{y_ip_j} & \mbox{if } y_i\not\in htr(v),
		\end{array}
		\right.
\]
and define a vector $b\in\R^m$ with entries
\[
	b_{i} :=  \left\{
		\begin{array}{ll}
			[(I-\Lambda)^T\Sigma]_{y_iv} & \mbox{if } y_i\in htr(v), \\
			\Sigma_{y_iv} & \mbox{if } y_i\not\in htr(v).
		\end{array}
		\right.
\]
Both $A$ and $b$ are generically identifiable because we have assumed that $\Lambda_{pa(y),y}$ is generically identifiable for every $y\in Y$.  Now, from the proof of Theorem 1 in \cite{halftrek}, we have $A\cdot \Lambda_{pa(v),v}=b$, and from Lemma 2 of \cite{halftrek} we deduce that $A$ is generically invertible.  It follows that $\Lambda_{pa(v),v}=A^{-1}b$ generically so that $\Lambda_{pa(v),v}$ is generically identifiable. 
\end{proof}

Algorithm 1 from \cite{halftrek} (hereafter called the HTC-algorithm) determines whether or not a mixed graph $G=(V,D,B)$ satisfies the conditions of Theorem \ref{thm:htc-id} and and thus checks if the graph is HTCI. The HTC-algorithm operates by iteratively looping through the nodes $v\in V$ and attempting to find a half-trek system $\Pi$ to $pa(v)$ with $\Pi$ having sources which are in a set of ``allowed" nodes $A\subset V$.  Here, a node $w$ is allowed if $w\not\in htr(v)\cup\{v\} \cup sib(v)$ or if $w$ was previously shown to be generically identifiable in the sense that all coefficients on directed edges $u\to w$ were shown to be generically identifiable. If such a half-trek system $\Pi$ is found for node $v$ then \citet{halftrek} show that this implies that $v$ is generically identifiable, and thus $v$ may be added to the set of allowed nodes for the remaining iterations. The algorithm terminates when all nodes have been shown to be generically identifiable or once it has iterated through all vertices and has been unable to show the generic identifiability of any new nodes. To find a half-trek system between a suitable subset of the set of allowed nodes $A$ and $pa(v)$, the HTC-algorithm solves a Max Flow problem on an auxiliary network $G_{\text{flow}}(A,v)$, and this step takes $\cO(|V|^3)$ time when $G$ is acyclic. If in $G_{\text{flow}}(A,v)$ one can find a flow of size $|pa(v)|$ then the half-trek system exists. See Section 6 of \cite{halftrek} for more details about how $G_{\text{flow}}(A,v)$ is defined. Finally, for an acyclic mixed graph $G$, the HTC-algorithm has a worst case running time of $\cO(|V|^5)$.

Algorithm \ref{alg:id} presents a simple modification of the HTC-algorithm to leverage Corollary \ref{cor:ancestral} extending the ability of the HTC to determine the generic identifiability of acyclic mixed graphs.  We emphasize that this algorithm considers only certain ancestral subsets and, as such, we do not necessarily expect the algorithm to reach a conclusion in all cases in which Corollary \ref{cor:ancestral} may be applicable.

\begin{algorithm}[t]
\caption{A sufficient test for generic identifiability.}
\label{alg:id}
\begin{algorithmic}[1]
\STATE\textbf{Input:} $G=(V,D,B)$, an acyclic mixed graph on $n$ nodes $v_1,\dots,v_n$\\
\STATE\textbf{Initialize:} $\text{SolvedNodes}\leftarrow\{v: pa
(v)=\varnothing\}$.\\
\REPEAT
	\FOR{$v=v_1,v_2,\ldots,v_n$}
		\IF{$v\not\in$ SolvedNodes}
			\STATE \(\triangleright\) Check if we can generically identify $\Lambda_{pa(v),v}$ using
			\STATE \(\triangleright\) the induced graph $G_{An(\{v\}\cup (A\cap S)}$.
			\STATE $S \gets An(v)\cup sib(An(v))$
			\STATE $A\leftarrow S\cap ($SolvedNodes $\cup\ (V\setminus htr(v)) )\setminus(\{v\}\cup sib(v))$
			\STATE $G' \gets $ the mixed component of $G_{An(\{v\}\cup (A\cap S))}$ containing $v$
			\STATE $A\gets (A$ $\cap$ (vertices in $G')$) $\cup$ (source nodes in $G'$)
			\IF{MaxFlow$(G_{\mathrm{flow}}(v,A))=|pa(v)|$}
				\STATE SolvedNode $\gets$ SolvedNodes $\cup\ \{v\}$.
				\STATE Skip to next iteration of the loop
			\ENDIF
			\STATE \(\triangleright\) Check if we can generically identify $\Lambda_{pa(v),v}$ using
			\STATE \(\triangleright\) the induced graph $G_{An(\{v\})}$.
			\STATE $G' \gets$ the mixed component of $G_{An(v)}$ containing $v$ \label{line:start}
			\STATE $A \gets (A\cap ($vertices of $G'$)) $\cup\ ($source nodes in $G'$)
			\IF{MaxFlow$(G'_{\mathrm{flow}}(v,A))=|pa(v)|$}
				\STATE SolvedNode $\gets$ SolvedNodes $\cup\ \{v\}$.
			\ENDIF  \label{line:end}
		\ENDIF
	\ENDFOR
\UNTIL{SolvedNodes $=V$ or no change has occurred in the last iteration.}
\STATE\textbf{Output:} ``yes" if SolvedNodes $=V$, ``no" otherwise.
\end{algorithmic}
\end{algorithm}

\begin{prop}
Algorithm \ref{alg:id} returns ``yes" only if the input acyclic mixed graph $G$ is generically identifiable and will return ``yes" whenever the HTC-algorithm does. Moreover, Algorithm \ref{alg:id} returns ``yes" for the HTC-inconclusive graph in Figure \ref{fig:htc-incon} and has time complexity at most $\cO(|V|^5)$.
\end{prop}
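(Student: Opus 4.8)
The proposition bundles four assertions: soundness (``yes'' implies generic identifiability), dominance over the HTC-algorithm, success on the graph of Figure~\ref{fig:htc-incon}, and the complexity bound. I would treat them in that order. For \textbf{soundness}, the plan is an induction on the order in which vertices enter SolvedNodes, maintaining the invariant that $\Lambda_{pa(v),v}$ is generically (indeed rationally) identifiable in $G$ for every solved $v$; the base case is the vertices with $pa(v)=\emptyset$. For the inductive step, suppose $v$ is added because MaxFlow returns $|pa(v)|$ on $G'_{\mathrm{flow}}(v,A)$, where $G'$ is the mixed component containing $v$ of the subgraph induced on the ancestral set $W:=An(\{v\}\cup A)$ (first branch) or $W:=An(v)$ (second branch). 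Both sets are ancestral by construction, so Corollary~\ref{cor:ancestral} applies to $G_W$, and $G'$ is one of its mixed components, so Theorem~\ref{thm:tian} reduces identification of the edges into $v$ to the single component $G'$. Because $W$ is ancestral, $pa_{G'}(v)=pa_G(v)$, so the flow value $|pa(v)|$ is exactly what is required, and by the flow/half-trek correspondence of \citet{halftrek} the flow yields a set $Y\subseteq A$ satisfying the half-trek criterion with respect to $v$ inside $G'$. The allowed sources split into the parentless ``source nodes'' of $G'$ that the algorithm adds to $A$, whose associated $\Lambda$-blocks are empty and hence trivially identifiable, and previously solved nodes. Granting that the latter are identifiable within $G'$, Lemma~\ref{lem:invert} applied in $G'$ certifies identifiability of $\Lambda_{pa(v),v}$ in $G'$, which lifts to $G$ by Theorem~\ref{thm:tian} and Corollary~\ref{cor:ancestral}. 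Once the loop terminates with SolvedNodes $=V$, all of $\Lambda$ is identifiable, and then $\Omega=(I-\Lambda)^T\Sigma(I-\Lambda)$ is too, giving generic identifiability of $G$.

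For \textbf{dominance} I would argue, again by induction on solving order, that whenever the HTC-algorithm solves a vertex $v$, the first branch of Algorithm~\ref{alg:id} does as well. The HTC-algorithm produces a half-trek system from an allowed set to $pa(v)$ with no sided intersection; since the right-hand directed side of each half-trek consists of ancestors of a parent of $v$, and the source of a bidirected half-trek is a sibling of such an ancestor, every vertex used lies in $S=An(v)\cup sib(An(v))$ and the sources lie in $A$. Hence the whole system survives in $G_W$ with $W=An(\{v\}\cup A)$. It remains to move it into the mixed component $G'$ of $v$, and I would do this in the language of vertex-disjoint paths in the split flow network underlying the half-trek correspondence: each flow path enters $G'$ through a parent of the bidirected block of $v$, i.e.\ through one of the source nodes of $G'$ that the algorithm adds to $A$; truncating each path at its last such entry leaves a family that is still vertex-disjoint (disjointness of subpaths is inherited in the split network) and still saturates $pa(v)$. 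Thus MaxFlow returns $|pa(v)|$ on $G'_{\mathrm{flow}}(v,A)$ and $v$ is solved.

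For \textbf{the example and the bound} I would trace Algorithm~\ref{alg:id} on Figure~\ref{fig:htc-incon}. Vertices $1,\dots,5$ are solved exactly as in the worked example, via the ancestral restriction to $\{1,2,3,4,5\}$ and the Tian decomposition of Figure~\ref{fig:mixed-comps}. The decisive vertex is $6$: here $pa(6)=\{1,2\}$ and $An(6)=\{1,2,6\}$, so the second branch (which can offer only the single source node of that component) fails, but $S=An(6)\cup sib(An(6))$ also contains $3$ and $4$, so the first branch permits the half-trek system $\{4\bi 1,\ 3\bi 2\}$ from $\{3,4\}$ to $pa(6)$, which has no sided intersection; hence $6$ is solved and the output is ``yes''. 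For the complexity, the \textbf{repeat} loop runs $O(|V|)$ times since every non-terminal pass solves at least one new vertex, each pass iterates over $O(|V|)$ vertices, and the dominating per-vertex cost is the $O(|V|^3)$ MaxFlow of \citet{halftrek} (the constructions of $An(v)$, $sib$, the induced subgraph and its mixed component are $O(|V|^2)$), giving $O(|V|^5)$.

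I expect the \textbf{main obstacle}, common to the first two parts, to be controlling identifiability across the whole family of ``mixed components of ancestral subgraphs'' that the algorithm visits: soundness needs each previously solved source to be identifiable inside the \emph{current} component $G'$ and not merely inside $G$, and dominance needs the HTC certificate to be transferable into $G'$. I would handle the transfer with the split-network truncation above, and secure the source-identifiability by strengthening the induction so that a solved vertex is certified identifiable in every mixed component of an ancestral subgraph in which it appears with its full parent set. The key enabling fact is the downward closure $An(y)\subseteq An(\{v\}\cup A)$ for every source $y$, together with the rationality of the Tian transformation, which keeps all recovery formulas rational in the global $\Sigma$ and hence composable.
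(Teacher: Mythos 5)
Your proposal is correct and follows essentially the same route as the paper: the paper's proof establishes soundness by citing Theorem 7 of \citet{halftrek} together with Corollary \ref{cor:ancestral}, proves dominance by observing that any HTC certificate $Y$ lies in $S=An(v)\cup sib(An(v))$ and survives in $G_{An(\{v\}\cup(A\cap S))}$ before invoking Lemma 4 of the supplement of \citet{halftrek} to transfer it into the mixed component containing $v$, and counts at most $|V|^2$ vertex-iterations each dominated by the $\cO(|V|^3)$ MaxFlow call. The only difference is one of self-containedness: where the paper cites Theorem 7 and Lemma 4 of \citet{halftrek}, you reconstruct them inline (the soundness induction via Lemma \ref{lem:invert} and Theorem \ref{thm:tian}, and the flow-path truncation argument, which is precisely the content of the cited component-transfer lemma), correctly flagging the same subtleties--that $pa_{G'}(v)=pa_G(v)$ on ancestral sets and that the Tian transformation keeps recovery formulas rational in the global $\Sigma$--that the paper's citations absorb.
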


\begin{proof}
The fact that Algorithm \ref{alg:id} only returns ``yes" if $G$ is generically identifiable follows from Theorem 7 in \cite{halftrek,halftrek-supp} and our Corollary \ref{cor:ancestral}. That Algorithm \ref{alg:id}  returns ``yes" whenever the HTC-algorithm does can be argued as follows:

 If, for a set of allowed nodes $A$ and $v\in V$, there exists $Y\subset A$ satisfying the HTC for $v$ then we must have that $Y\subset S:=An(v)\cup sib(An(v))$ and $Y$ satisfies the HTC for $v$ in $G_{An(\{v\}\cup (A\cap S))}$.  Lemma 4 of \cite{halftrek-supp} then yields that there exists a set $Y'$ of allowed nodes which satisfies the HTC for $v$ in the mixed component of $G_{An(\{v\}\cup (A\cap S))}$ containing $v$. Hence, if $v$ is added to the set of solved nodes in the HTC-algorithm it will also be added to the set of solved nodes in Algorithm \ref{alg:id}. From this it follows that if the HTC-algorithm outputs ``yes'' then so will Algorithm \ref{alg:id}.
	
It is straightforward to check that Algorithm \ref{alg:id} returns ``yes'' for the graph in Figure \ref{fig:htc-incon} and, thus, it remains only to argue that the time complexity is at most $\cO(|V|^5)$. Note that the Max Flow algorithm for this problem has a running time of $\cO(|V|^3)$ since $G$ is acyclic, see \citet{halftrek} for details. It is easy to see that this running time dominates on each iteration. Moreover, since at the end of each iteration through the $|V|$ nodes of the graph,  the algorithm must either terminate or add at least one vertex to the set of solved nodes, it follows that there are at most $|V|^2$ iterations.   We conclude that  the maximum run time of the algorithm is  $\cO(|V|^2\cdot |V|^3) = \cO(|V|^5)$.
\end{proof}

\begin{rem}
One might expect that lines \ref{line:start} to \ref{line:end} in Algorithm \ref{alg:id} are superfluous. This is, however, false, and indeed we have found examples of graphs $G$ on 10 nodes for which Algorithm \ref{alg:id} returns ``yes'' but the corresponding algorithm with lines \ref{line:start} to \ref{line:end} removed returns ``no.'' As these examples are fairly large we have chosen to not display them here.
\end{rem}

\section{Computational Experiments} \label{sec:comp}

We now run a simulation study to examine the effect of applying Algorithm \ref{alg:id} to HTC-inconclusive graphs. All code is written in $R$, and we use the SEMID package to determine HTC-identifiability and HTC-unidentifiability \citep{R,semid}.
\begin{algorithm}[t]
\caption{A procedure for generating random acyclic mixed graphs.} \label{alg:rgraph}
\begin{algorithmic}[1]
\STATE\textbf{Input:} A positive integer $n$ and $0<p,q<1$\\
\STATE\textbf{Initialize:} A mixed graph $G=(V,D,B)$ with $V=\{1,...,n\}$, $D=B=\emptyset$
\STATE Pick  a random collection $E$ of $n-1$ bidirected edges so that $(V,E)$ is a tree.
\STATE $B \gets E$
\FOR{$1\leq i< j \leq n$}
	\STATE Add $i\bi j$ to $B$ with probability $p$
\ENDFOR
\FOR{$1\leq i< j \leq n$}
	\STATE Add $i\to j$ to $D$ with probability $q$
\ENDFOR
\STATE\textbf{Output:} G
\end{algorithmic}
\end{algorithm}
For each combination of $n\in\{6,8,10,12\}$, $p\in\{.1,.2,.3\}$, and $q\in\{.2,.3,.4,.5,.6\}$ we perform the following steps:
\begin{enumerate}[(i)]
	\item Use Algorithm \ref{alg:rgraph} with probability parameters $p$ and $q$ to generate random acyclic mixed graphs $G$ with connected bidirected part on $n$ nodes, until we have found 1000 graphs which are HTC-inconclusive.
	\item For each of the 1000 HTC-inconclusive graphs $G$, use Algorithm \ref{alg:id} to test the generic identifiability of $G$.
	\item Record the proportion of the 1000 graphs that are shown to be generically identifiable by Algorithm \ref{alg:id}.  Call this proportion $a_{n,p,q}$.
\end{enumerate}
To summarize our findings we compute, for each pair $(n,p)$, the average $b_{n,q} = \frac{1}{3}\sum_{p} a_{n,p,q}$.   We then plot the values of $b_{n,q}$  in Figure \ref{fig:results_graph}.   According to this figure, Algorithm \ref{alg:id} provides a modest increase in the number of graphs that are generically identifiable. This improvement is seen to be largest when $q$ is large, that is, the directed part of the mixed graph is dense.

\begin{figure}[t]
        \centering
        \centerline{\includegraphics[scale=.45]{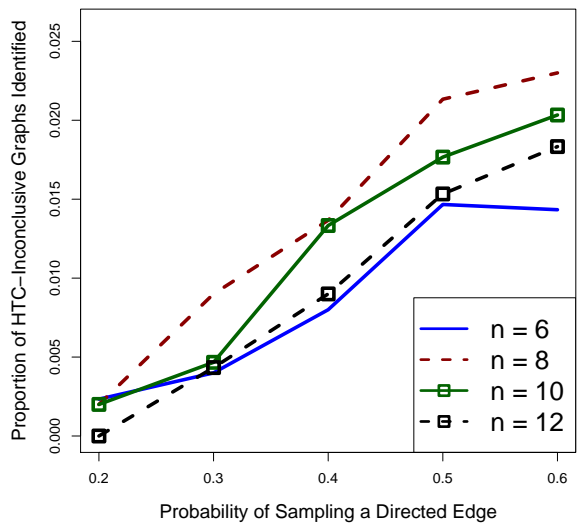}}
        \caption{The average proportion of HTC-inconclusive graphs found to be generically identifiable by Algorithm \ref{alg:id}.}\label{fig:results_graph}
\end{figure}

\section{Conclusion} \label{sec:conclusion}

We have shown how the generic identifiability of a subgraph of a mixed graph $G$ induced by an ancestral subset of vertices implies the generic identifiability of the corresponding  edge coefficients in $G$ (Theorem \ref{thm:ancestral} and Corollary \ref{cor:ancestral}). We then provided, in Algorithm \ref{alg:id}, one specific way of how to leverage this result by using the HTC of \cite{halftrek} and the decomposition techniques of \citet{tian}.  Our new algorithm provides a modest strengthening of the HTC while not increasing the algorithmic complexity of the HTC-algorithm of \citet{halftrek}.   

When saying above that Algorithm \ref{alg:id} constitutes one way of leveraging, we mean that the algorithm considers only certain ancestral subsets.  While we do not have any examples to report, it is possible that there are acyclic mixed graphs for which Algorithm \ref{alg:id} does not return ``yes'' but which could be proven generically identifiable by a combination of Corollary \ref{cor:ancestral}, the Tian decomposition, and the HTC-algorithm.  This said, it is not clear to us that all ancestral subsets can be considered in an algorithm with polynomial run time.  Clarifying this issue would be an interesting topic for future work.

%\appendix

\section*{Acknowledgments}

We would like to thank Thomas Richardson whose questions following a seminar talk started this project.  This work was partially supported by the U.S. National Science Foundation (DMS-1305154) and National Security Agency (H98230-14-1-0119). 
%The United States Government is authorized to reproduce and distribute reprints.

\bibliographystyle{abbrvnat}
\bibliography{half_trek_ancestral}
\end{document}